\newcommand{\norm}[1]{\left\lVert#1\right\rVert}
\pgfplotsset{compat=newest}
\pgfplotsset{plot coordinates/math parser=false}
\newtheorem{assumption}{Assumption}
\newtheorem{remark}{Remark}
\newtheorem{proposition}{Proposition}
\newtheorem{definition}{Definition}
\newcommand\scalemath[2]{\scalebox{#1}{\mbox{\ensuremath{\displaystyle #2}}}}
\newcommand{\A}{A}
\newcommand{\R}{\mathbb{R}}
\newcommand{\1}{\mathbf{1}}
\newcommand{\0}{\mathbf{0}}
\newcommand{\I}{\mathbf{I}}
\newcommand{\seq}[2]{\mathbb{I}_{#1}^{#2}}
\definecolor{blue_set}{RGB}{204,229.5,255}
\definecolor{pink_set}{RGB}{255,204,229.5}
\definecolor{grey_set}{RGB}{153,153,153}
\definecolor{green_set}{RGB}{102,204,102}
\definecolor{cyan_set}{RGB}{69,243,248}
\definecolor{yellow_set}{RGB}{229.5000,229.5000,114.7500}
\definecolor{red_set}{RGB}{231,172,116}
\definecolor{red_border}{RGB}{229.5,114.75,114.75}
\definecolor{terminal_set}{RGB}{102,204,255}
\definecolor{mRPI_set}{RGB}{102,255,102}
\newcommand\munderbar[1]{\underaccent{\bar}{#1}}
\newcommand{\smallmat}[1]{\left[ \begin{smallmatrix}#1 \end{smallmatrix} \right]}
\newcommand{\MM}[1]{{\color{red}{#1}}}
\definecolor{wheat}{rgb}{0.96,0.87,0.70}
\definecolor{mario}{rgb}{0.8,0.8,1}
\definecolor{SamComm}{rgb}{0.9,0.9,0.1}
\definecolor{ao}{rgb}{0.0, 0.5, 0.0}
\title{\LARGE \bf
	Parameter-Dependent Robust Control Invariant Sets for LPV Systems with Bounded Parameter-Variation Rate
	%
}
\author{Sampath Kumar Mulagaleti, Manas Mejari, Alberto Bemporad
	\thanks{S.K. Mulagaleti is with University of Trento, Via Sommarive, 9, 38123 Povo, Trento, Italy (\texttt{sampath.mulagaleti@unitn.it}), M. Mejari is with IDSIA Dalle Molle Institute for Artificial Intelligence, USI-SUPSI,  Via la Santa 1, CH-6962 Lugano-Viganello, Switzerland (\texttt{manas.mejari@supsi.ch}), and A. Bemporad is with IMT School for Advanced Studies Lucca, Piazza S.Francesco 19, 55100 Lucca, Italy (\texttt{alberto.bemporad@imtlucca.it}).
 }%
}
\begin{document}

	\maketitle
	\thispagestyle{empty}
	\pagestyle{empty}

	\begin{abstract}
 Real-time measurements of the scheduling parameter of \emph{linear parameter-varying} (LPV) systems enable the synthesis of \emph{robust control invariant} (RCI) sets and parameter dependent controllers inducing invariance. We present a method to synthesize \emph{parameter-dependent robust control invariant} (PD-RCI) sets for LPV systems with bounded parameter variation, in which invariance is induced using PD-vertex control laws.
The PD-RCI sets are parameterized as \emph{configuration-constrained} polytopes that admit a joint
parameterization of their facets and vertices. The proposed sets and associated control laws are computed by solving a single semidefinite programing (SDP) problem. Through numerical examples, we demonstrate that the proposed method outperforms state-of-the-art methods for synthesizing PD-RCI sets, both with respect to conservativeness and computational load.  
	\end{abstract}
\section{Introduction}	
Robust control invariant (RCI) sets are subsets of the state-space in which a dynamical system can be forced to evolve indefinitely in the presence of arbitrary but bounded disturbances. Such sets form the basis for the analysis and design of control schemes, since they define the regions in which the system can be forced to operate~\cite{Blanchini2015,sr10}. Hence, the development of methods to characterize and compute RCI sets is an active research area.
These approaches can broadly be divided into two categories. The first category is related to recursive approaches, in which the limit set of finite-time controllable sets is computed, see, e.g.,~\cite{sr10,Bertsekas72,Kolmanovsky1998,Rungger2017,anevlavis2022controlled}. In the second category, RCI sets of an a priori selected representation are computed by enforcing invariance using an a priori chosen controller parameterization, see, e.g.,~\cite{Gupta2019,Liu2015}. This paper presents a method to compute RCI sets for LPV systems in the latter setting.

A common approach to compute RCI sets for LPV systems involves considering the scheduling parameter as an arbitrary disturbance~\cite{Hanema2020,sm05,ag19b,Nguyen15}. However, this parameter is typically measured before computing the control input, and exploiting this information can help synthesize RCI sets with reduced conservativeness. This rationale is the basis for numerous parameter-dependent (PD) control design schemes~\cite{garone2009_thesis}. While the RCI sets corresponding to these PD-controllers can be synthesized a posteriori, to the best of our knowledge, the technique proposed in~\cite{Gupta2023} is the only one that provides the simultaneous synthesis of PD-control laws and their asscociated PD-RCI sets. 
The PD-RCI sets serve to identify regions within the state-space from which invariance can be achieved using the corresponding PD-control law, exploiting
the available information on the scheduling parameter.

In this paper, we present a new approach to simultaneously synthesize PD-RCI sets and corresponding PD-control laws using a single semidefinite programing problem (SDP). The main difference with respect to~\cite{Gupta2023} stems from the representation of the  PD-RCI sets, and the parameterization of the corresponding invariance-inducing control laws. We represent PD-RCI sets as polytopes having fixed orientation and varying offsets that depend affinely on the scheduling parameter, and induce invariance in these sets using a PD-vertex control law~\cite{Gutman1986}. Since every linear feedback law can be transformed into a vertex control law, our parameterization is inherently less conservative than that proposed in~\cite{Gupta2023}. Furthermore, we enforce \textit{configuration constraints}~\cite{villanueva2022configurationconstrained} on these polytopes, that enables a convex formulation of the PD-RCI set synthesis problem, rather than a much more computationally expensive nonlinear matrix problem developed in~\cite{Gupta2023}. Finally, unlike in~\cite{Gupta2023}, we seamlessly incorporate information of the rate of parameter variation into the PD-RCI computation problem. It is well known that taking into account the bounds on rate of variation 
can reduce conservativeness in the control design procedure and consequently in the computation of the RCI sets~\cite{Oliveira2008,Hanema2020,Morato2020}. Through numerical examples, we demonstrate that our approach computes PD-RCI sets with reduced conservativeness at a much lower computational expense compared to the approach of~\cite{Gupta2023}.

 This paper is organized as follows. In Section~\ref{sec:prob_defn}, the concept of PD-RCI sets for LPV systems is recalled, and the problem statement formulated. Then, in Section~\ref{sec:computation_PDRCI}, a semidefinite programming problem to compute PD-RCI sets is derived. Finally in Section~\ref{sec:examples}, numerical examples to support the efficacy of the approach, and a comparison to other state-of-the-art methods are presented.

\textit{Notation:} Given a matrix $L\in \R^{m \times n}$, we denote by $L\mathcal{X}$ the image $\{y\in\R^{m}: y=Lx, x \in \mathcal{X}\}$ of a set $\mathcal{X} \subseteq \R^n$ under the linear transformation induced by $L$. We denote the $i$-th row of matrix $L$ by $L_i$. The symbols $\1^{n \times m}$ and $\0^{n \times m}$ denote all-ones and all-zeros matrices in $\R^{n \times m}$ respectively, and $\I^n$ denotes the identity matrix of order $n$. We ignore the superscript if sizes are clear from the context. The set $\mathbb{I}_m^n:=\{m,\ldots,n\}$ denotes the set of natural numbers between two integers $m$ and $n$, $m\leq n$. Given compatible matrices $A$ and $B$, $A \otimes B$ denotes their Kronecker product. The symbol $\R^{n \times m}_+$ denotes the set of all matrices in $\R^{n \times m}$ with nonnegative elements.
		The Minkowski set addition is defined as $\mathcal{X} \oplus \mathcal{Y}:=\{x+y:x\in\mathcal{X},y\in\mathcal{Y}\}$, and set subtraction as $\mathcal{X} \ominus \mathcal{Y}:=\{x:\{x\} \oplus \mathcal{Y} \subseteq \mathcal{X}\}$. Given points $\{x_i, i \in \mathbb{I}_1^N\}$, $\mathrm{CH}(x_i, i \in \mathbb{I}_1^N)$ denotes their convex-hull. Symmetric block matrices are denoted by $*$. A $p$-norm ball  is denoted by $\mathcal{B}_{p}^n := \{x \in \mathbb{R}^n: \left \| x \right \|_p \leq 1  \} $. 
  \begin{proposition}[Strong duality~\cite{Schneider2013}]\label{prop:strong_duality}
     Given $a \in \R^n$, $b \in \R$, $M \in \R^{m \times n}$ and $q \in \R^m$, the inequality $a^{\top} x \leq b$ holds for all $x$ such that $Mx \leq q$ if and only if there exists some $\Lambda \in \R^{1 \times m}_+$ satisfying $\Lambda q \leq b$ and $\Lambda M=a^{\top}$.    $\hfill\square$
  \end{proposition}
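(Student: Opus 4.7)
The statement is a standard Farkas-lemma / linear programming duality result, so the plan is to prove each implication separately and to keep careful track of the degenerate cases that often trip up textbook statements of this kind.

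Sufficiency is immediate and purely algebraic. The plan is: assume a multiplier row $\Lambda\in\R^{1\times m}_+$ exists with $\Lambda M=a^\top$ and $\Lambda q\leq b$, pick any $x$ satisfying $Mx\leq q$, and chain the inequalities
\begin{equation*}
a^\top x \;=\; \Lambda M x \;\leq\; \Lambda q \;\leq\; b,
\end{equation*}
where the first inequality uses $\Lambda\geq 0$ componentwise together with $Mx\leq q$. No optimization or hyperplane argument is needed here.

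Necessity is the real content and is where LP duality enters. The plan is to associate to the implication the primal linear program $\max\{a^\top x : Mx\leq q\}$ and its dual $\min\{\Lambda q : \Lambda M=a^\top,\ \Lambda\geq 0\}$. If the set $\{x:Mx\leq q\}$ is nonempty, the hypothesis tells us the primal is bounded above by $b$, so it has an optimal value $v^\star\leq b$; by strong LP duality (applied in the standard direction for an LP with nonempty feasible region and finite optimum) the dual is feasible with optimal value $v^\star$, and any dual optimizer $\Lambda^\star$ supplies the required certificate with $\Lambda^\star q=v^\star\leq b$ and $\Lambda^\star M=a^\top$, $\Lambda^\star\geq 0$.

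The step I expect to be the main obstacle, and hence the one to treat carefully, is the corner case in which $\{x:Mx\leq q\}=\emptyset$: then the premise "$a^\top x\leq b$ for all feasible $x$" is vacuously true and a multiplier must still exist. Here the plan is to invoke Farkas' lemma on the infeasibility of $Mx\leq q$ to obtain $\mu\in\R^{1\times m}_+$ with $\mu M=0$ and $\mu q<0$; then any fixed feasible multiplier $\Lambda_0$ for the equality $\Lambda M=a^\top$ (which, if none exists, can itself be handled by taking $\Lambda_0=0$ after first subtracting a large multiple of $\mu$) can be replaced by $\Lambda_0+t\mu$ with $t>0$ chosen large enough to force $(\Lambda_0+t\mu)q\leq b$ while preserving nonnegativity and the equality constraint. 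Once this degenerate branch is cleanly dispatched, combining it with the LP-duality branch above completes the equivalence, and the proof is essentially a citation of \cite{Schneider2013}.
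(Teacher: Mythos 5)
The paper itself offers no proof of this proposition --- it is stated as a cited result from \cite{Schneider2013} --- so the only question is whether your argument is sound. Your sufficiency direction and your main necessity argument (pass to the LP $\max\{a^{\top}x : Mx\leq q\}$, note it is bounded above by $b$ when the feasible set is nonempty, and extract the certificate from a dual optimizer via strong LP duality) are correct and are exactly the standard route; they cover every instance in which the paper actually invokes the proposition, since $\mathcal{P}$, $\mathcal{W}$, and $\mathcal{P}^+$ are all nonempty there.

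The degenerate branch you flag as ``the main obstacle,'' however, is not merely delicate --- it is where the statement, read literally, actually fails, and your proposed repair does not work. Take $n=2$, $M=\left[\begin{smallmatrix}1 & 0\\ -1 & 0\end{smallmatrix}\right]$, $q=(-1,-1)^{\top}$, so that $Mx\leq q$ forces $x_1\leq -1$ and $x_1\geq 1$ and the feasible set is empty, and take $a=(0,1)^{\top}$. The premise ``$a^{\top}x\leq b$ for all feasible $x$'' holds vacuously, yet $\Lambda M=(\lambda_1-\lambda_2,\,0)$ can never equal $(0,1)$, so no multiplier exists for any $b$. Your suggested fix --- start from some $\Lambda_0$ with $\Lambda_0 M=a^{\top}$ and add $t\mu$ where $\mu\geq 0$, $\mu M=0$, $\mu q<0$ --- silently assumes $a^{\top}$ lies in the row space of $M$; when it does not, no $\Lambda_0$ exists and adding multiples of $\mu$ (which annihilates $M$) cannot create one. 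The clause ``which, if none exists, can itself be handled by taking $\Lambda_0=0$'' is not an argument: $\Lambda_0=0$ gives $\Lambda_0 M=\0\neq a^{\top}$ in general. The correct resolution is to add the hypothesis that $\{x:Mx\leq q\}\neq\emptyset$ (as the standard affine form of Farkas' lemma does, and as is implicitly satisfied everywhere the paper uses the result) rather than to claim the empty case can be dispatched; with that hypothesis in place, your LP-duality argument alone is a complete proof.
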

	\section{Problem definition}
 \label{sec:prob_defn}
 Consider the discrete-time LPV system with dynamics
\begin{align}
\label{eq:LPV_system}
x^+=A(p)x+B(p)u+w,
\end{align}
where $x \in \mathbb{R}^{n}$, $u\in \R^{m}$, $w \in \R^{n}$, and $p \in \R^s$ represent the state, input, additive disturbance, and scheduling parameter, respectively, and $x^+$ is the successor state. The matrices $A(p)$ and $B(p)$ depend linearly on the parameter $p$ as
\begin{align}
\label{eq:P_dependence}
A(p):=\underset{j=1}{\overset{s}{\scalemath{1}{\sum}}} p_j A^j, && B(p):=\underset{j=1}{\overset{s}{\scalemath{1}{\sum}}} p_j B^j,
\end{align}
where $A^j$ and $B^j$ are matrices of appropriate dimensions. 
    The system is subject to state constraints $x \in \mathcal{X}$, input constraints $u \in \mathcal{U}$, the additive disturbance $w \in \mathcal{W}$, and the scheduling parameter satisfies $p \in \mathcal{P}$.  
   Moreover, the parameter variation is assumed to be bounded in a given set $\mathcal{R}$, thus, for any $p \in \mathcal{P}$, the successive parameter $p^+$ is bounded by a set $\mathbb{P}(p)$ as
  \begin{align}
    \label{eq:bounded_param_variation}
    p^+ \in \mathbb{P}(p):=\left(\{p\} \oplus \mathcal{R}\right) \cap \mathcal{P}.
    \end{align}
     \begin{assumption}
        \label{ass:zero_in_deltaP}
        $\0^s \in \mathcal{R}$. $\hfill\square$
    \end{assumption}
    This assumption guarantees that $\mathbb{P}(p) \neq \emptyset$ for all $p \in \mathcal{P}$.
\vspace{5pt} \\
\indent
\textit{\textbf{Definition:}} \textit{A set $\mathcal{S}(p) \subseteq \R^{n}$ is a parameter-dependent robust control invariant (PD-RCI) set for System~\eqref{eq:LPV_system} if and only if}
    \begin{align}
    \label{eq:LPV_conditions:1}
   & \quad \mathcal{S}(p) \subseteq \mathcal{X}, \ \forall \ p  \in \mathcal{P},  \\ \nonumber \vspace{-20pt}  \\
     \label{eq:LPV_conditions:2}
   &  \begin{cases}
    \forall \ (p,x) \in \mathcal{P} \times \mathcal{S}(p) , \ \forall \ (p^+,w) \in \mathbb{P}(p) \times \mathcal{W} , \  \vspace{3pt} \\
  \exists  \ 
 u=u(x,p) \in \mathcal{U} \ : \ \A(p)x+B(p) u +w \in \mathcal{S}(p^+). 
 \end{cases}
    \end{align} 
    $\hfill\square$
\\
Note that the inclusion in~\eqref{eq:LPV_conditions:2} is equivalent to
    \begin{align}
    \label{eq:LPV_conditions:3}
        \begin{matrix}
\{ A(p)x+B(p) u \} \oplus \mathcal{W} \subseteq \underset{p^+ \in \mathbb{P}(p)}{\bigcap} \mathcal{S}(p^+). \end{matrix}
    \end{align}
This definition of a PD-RCI set draws inspiration from~\cite{Gupta2023}. 
In a conventional setting, an RCI set characterizes a set of states in which a system can be forced to belong \textit{independently of the current parameter $p(t)$}.
In contrast, the PD-RCI set explicitly accounts for the current parameter value, resulting in an enlarged set of states from which invariance can be achieved. Unlike in~\cite{Gupta2023}, our definition also \textit{accounts for bounded parameter variations} $\mathcal{R}$, thus reducing conservativeness further. If $\mathcal{S}(p)$ verifies inclusions~\eqref{eq:LPV_conditions:1}-\eqref{eq:LPV_conditions:2}, then given any initial state-parameter pair $(x(0),p(0))$ with $p(0) \in \mathcal{P}$ and $x(0) \in \mathcal{S}(p(0))$, 
there exist inputs $u(t) \in \mathcal{U}$ enforcing $x(t) \in \mathcal{S}(p(t))$ for all future disturbances $w(t) \in \mathcal{W}$ and parameters $p(t) \in \mathcal{P}$ satisfying~\eqref{eq:bounded_param_variation} for all $t\geq 0$. 

We present an illustration of PD-RCI sets in Figure~\ref{fig:PD_RCI_illustration}. The left figure illustrates the parameter space, where $p^0$ is the current parameter value, gray region is the set $\mathcal{P}$, and thick black line is the set $\{p^0\} \oplus \mathcal{R}$. Then, we have $\mathbb{P}(p^0)=\mathrm{CV}\{p^1,p^2\}$. The right figure illustrates the state space. As per the condition in~\eqref{eq:LPV_conditions:3}, any $x \in \mathcal{S}(p^0)$ can be driven into the set $\cap_{p \in \mathbb{P}(p^0)} \mathcal{S}(p^j)$. Additionally, the set
\begin{align}
\label{eq:tilde_S}
\tilde{\mathcal{S}}:=\bigcup_{ p \in \mathcal{P}} \mathcal{S}(p)
\end{align}
is plotted. By definition, $\mathcal{S}(p) \subseteq \tilde{\mathcal{S}}$, and $\mathcal{S}(p)$ represents states that can be rendered invariant for a given parameter $p$.
\begin{figure}[t]
\centering
\vspace*{0.1cm}
\hspace*{0.cm}
{
\resizebox{0.41\textwidth}{!}
{
\begin{tikzpicture}
\begin{scope}[xshift=0.cm]
\node[draw=none,fill=none](tanks_fig) {\includegraphics[trim=0 0 0 0,clip,scale=3]{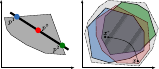}};
\end{scope}
\end{tikzpicture}}
}
\captionsetup{width=1\linewidth}
\caption{Illustration of PD-RCI sets. (\textit{Left: Parameter space}) The gray set is $\mathcal{P}$, the current parameter is $p^0$, and the thick black line is $\{p^0\} \oplus \mathcal{R}$, such that $\mathbb{P}(p^0)=\mathrm{CV}\{p^1,p^2\}$. (\textit{Right: State space}) Red, blue and green sets are $\mathcal{S}(p^0)$, $\mathcal{S}(p^1)$ and $\mathcal{S}(p^2)$ respectively. Hatched region is $\cap_{ p \in \mathbb{P}(p^0)} \mathcal{S}(p)$. Inclusion \eqref{eq:LPV_conditions:3} implies any $x \in \mathcal{S}(p^0)$ can be driven into the hatched region. The gray set with dot-dashed outline is $\tilde{\mathcal{S}}$ defined in~\eqref{eq:tilde_S}. This set includes $\mathcal{S}(p)$ for all $p \in \mathcal{P}$.}
\label{fig:PD_RCI_illustration}
\end{figure}

\subsubsection*{Polytopic sets}
We restrict our discussion to the polytopic setting for brevity. We assume that System~\eqref{eq:LPV_system} is subject to the following constraints%
    \begin{subequations}
    \label{eq:polytope_definitons}
    \begin{align}
       & \mathcal{X}:=\{x: H^x x \leq h^x\},  &\mathcal{U}:=\{u : H^u u \leq h^u\}, \\
        & \mathcal{W}:=\{w : H^w w \leq h^w\}, & \mathcal{P}:=\{p : H^p p \leq h^p\}, \\
         & \mathcal{R}:=\{  p \in \R^s : H^{\delta}  p \leq h^{\delta}\},
    \end{align}
    \end{subequations}
    with $\scalemath{0.99}{h^x \in \R^{m_x}, \ h^u \in \R^{m_u}, \ h^w \in \R^{m_w}, \ h^p \in \R^{m_p}, \ h^{\delta} \in \R^{m_{\delta}}}$. We remark that the methodology in the sequel can be adapted to more general convex representations. In the following result, we characterize polytopic PD-RCI sets in vertex representation.
    \begin{proposition}
    \label{prop:PD_RCI_basic_cond}
    For any parameter $p \in \mathcal{P}$, suppose that there exists a parameterized polytope $\mathcal{S}(p)=\mathrm{CH}\{x^i(p),i \in \mathbb{I}_1^{v(p)}\} \subseteq \mathcal{X}$, and for each vertex $x^i(p)$, there exists an input $u^i(p) \in \mathcal{U}$ such that
    \begin{align}
    \label{eq:RCI_vert_condition}
        \{A(p)x^i(p)+B(p)u^i(p)\} \oplus \mathcal{W} \subseteq  \underset{p^+ \in \mathbb{P}(p)}{\bigcap}\mathcal{S}(p^+).
    \end{align}
    Then $\mathcal{S}(p)$ is a PD-RCI set, i.e., it satisfies~\eqref{eq:LPV_conditions:1} and~\eqref{eq:LPV_conditions:3}.  $\hfill\square$
    \end{proposition}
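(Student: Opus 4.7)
The proof will proceed by directly verifying the two conditions \eqref{eq:LPV_conditions:1} and \eqref{eq:LPV_conditions:3} defining a PD-RCI set. The first condition $\mathcal{S}(p) \subseteq \mathcal{X}$ for all $p \in \mathcal{P}$ is assumed outright in the hypothesis, so the entire work lies in establishing \eqref{eq:LPV_conditions:3}. The key construction is the \emph{PD-vertex control law} alluded to in the introduction: given a state $x \in \mathcal{S}(p)$, we express it as a convex combination of the vertices, $x = \sum_{i=1}^{v(p)} \lambda_i x^i(p)$ with $\lambda_i \geq 0$ and $\sum_i \lambda_i = 1$, and define the corresponding input as $u := \sum_{i=1}^{v(p)} \lambda_i u^i(p)$.

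With this choice, the verification consists of three elementary observations. First, since $u^i(p) \in \mathcal{U}$ for every vertex and $\mathcal{U}$ is a polytope (hence convex), the convex combination $u$ also lies in $\mathcal{U}$. Second, by linearity of the dynamics together with $\sum_i \lambda_i = 1$, for any $w \in \mathcal{W}$ we have
\begin{align*}
A(p)x + B(p)u + w \;=\; \sum_{i=1}^{v(p)} \lambda_i \bigl(A(p) x^i(p) + B(p) u^i(p) + w\bigr).
\end{align*}
Third, the hypothesis \eqref{eq:RCI_vert_condition} guarantees that each summand $A(p)x^i(p) + B(p)u^i(p) + w$ belongs to the set $\bigcap_{p^+ \in \mathbb{P}(p)} \mathcal{S}(p^+)$. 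Since an intersection of convex sets is convex, the convex combination remains in this intersection, proving that $\{A(p)x + B(p)u\} \oplus \mathcal{W} \subseteq \bigcap_{p^+ \in \mathbb{P}(p)} \mathcal{S}(p^+)$.

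I do not anticipate any genuine obstacle; the argument is essentially a one-line convexity computation, and the only point worth being careful about is keeping $w$ fixed while $\lambda_i$ varies across the decomposition, which is handled automatically by the identity $\sum_i \lambda_i = 1$. The proposition thus follows by taking $u = u(x,p)$ to be the map induced by the above construction (noting that the choice of $\lambda_i$ need not be unique, but any selection works).
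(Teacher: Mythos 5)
Your proof is correct and follows essentially the same route as the paper's: decompose $x$ as a convex combination of the vertices, apply the corresponding convex combination of vertex inputs, use linearity together with $\sum_i \lambda_i = 1$ to distribute $w$ across the summands, and conclude via convexity of the intersection $\bigcap_{p^+ \in \mathbb{P}(p)} \mathcal{S}(p^+)$. No gaps.
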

    \begin{proof}
        Condition~\eqref{eq:LPV_conditions:1} holds by definition of $\mathcal{S}(p)$. Regarding Condition~\eqref{eq:LPV_conditions:3}, consider any $x \in \mathcal{S}(p)$ for some $p \in \mathcal{P}$. By convexity, there exist $\lambda \in \R_+^{v(p)}$ satisfying $\1^{\top} \lambda = 1$ such that $x = \underset{i=1}{\overset{v(p)}{\scalemath{1.2}{\Sigma}}} \lambda_i x^i(p)$.
        Applying the input $u= \underset{i=1}{\overset{v(p)}{\scalemath{1.2}{\Sigma}}} \lambda_i u^i(p) \in \mathcal{U}$ yields, for any $w \in \mathcal{W}$,
        \begin{align*}
        \hspace{-5pt}
            x^+& = A(p) \underset{i=1}{\overset{v(p)}{\scalemath{1.2}{\Sigma}}} \lambda_i x^i(p)+B(p) \underset{i=1}{\overset{v(p)}{\scalemath{1.2}{\Sigma}}} \lambda_i u^i(p) + w \\
&\hspace{-10pt}= \underset{i=1}{\overset{v(p)}{\scalemath{1.2}{\Sigma}}} \lambda_i(  A(p) x^i(p)+B(p)u^i(p)+w)  \in  \underset{p^+ \in \mathbb{P}(p)}{\bigcap} \mathcal{S}(p^+)  
        \end{align*}
        where the last inclusion follows from~\eqref{eq:RCI_vert_condition} and convexity of the set $\{x \in \mathcal{S}(p^+) : p^+ \in \mathbb{P}(p)\}$.
    Since $p \in \mathcal{P}$ and $x \in \mathcal{S}(p)$ were arbitrary,~\eqref{eq:LPV_conditions:3} is satisfied by $\mathcal{S}(p)$. \\
    \end{proof}

  \textit{\textbf{Problem statement: }}
    \textit{Given the LPV system~\eqref{eq:LPV_system} subject to the constraints in~\eqref{eq:polytope_definitons},
    compute a polytopic PD-RCI $\mathcal{S}(p)$ with vertex control inputs $\{u^i(p) \in \mathcal{U}, i \in \mathbb{I}_1^{v(p)}\}$ verifying inclusions~\eqref{eq:LPV_conditions:1} and~\eqref{eq:RCI_vert_condition}.} $\hfill\square$
    \vspace{5pt} \\ \indent
\textit{Parameter-dependent control law:} As described in Proposition~\ref{prop:PD_RCI_basic_cond}, invariance in PD-RCI sets verifying inclusion~\eqref{eq:RCI_vert_condition} is induced using a vertex control law~\cite{Gutman1986}. At time $t$, the control input is given by
\begin{align}
\label{eq:cc_control_1}
    u(t)=\sum_{i=1}^{v(p(t))} \lambda_iu^i(p(t)),
\end{align}
where $\lambda \in \R^{v(p(t))}$ solves the quadratic program
\begin{align}
\label{eq:QP_vertex_inputs}
    \min_{\lambda \geq \0} \ ||\lambda||_2^2 \ \text{s.t.} \ \sum_{i=1}^{v(p(t))} \lambda_i x^i(p(t))=x(t), \ \ \1^{\top}\lambda=1.
\end{align}
 In~\eqref{eq:cc_control_1}-\eqref{eq:QP_vertex_inputs}, $x^i(p)$ and $u^i(p)$ are parameter-dependent vertices and the corresponding control inputs respectively, the functional forms of which we present in the sequel. By construction, Problem~\eqref{eq:QP_vertex_inputs} is feasible if $x(t) \in \mathcal{S}(p(t))$. Moreover, if $x(0) \in \mathcal{S}(p(0))$, then Problem~\eqref{eq:QP_vertex_inputs} is feasible for all $t\geq0$. 
    \begin{remark}
        The following results can be generalized to accommodate constraints $\mathcal{Z}:=\{y:H^y y \leq h^y\}$ on the parameter-dependent output $y=C(p)x+D(p)u$ via minor adaptations.   $\hfill\square$
    \end{remark}
    \begin{remark}
       If full state-parameter measurements are unavailable, PD-RCI sets can be computed for observer dynamics derived using a PD-observer synthesized via, e.g.~\cite{Heemels2010}. Further research into a concurrent synthesis approach is a subject of future study.   $\hfill\square$
    \end{remark}
    \section{Computation of PD-RCI sets}
    \label{sec:computation_PDRCI}
  We now present an approach to compute PD-RCI sets. We first introduce a parameterization of the sets in Section~\ref{sec:cc_polytopes}, for which we encode inclusions~\eqref{eq:LPV_conditions:1} and~\eqref{eq:RCI_vert_condition} in Section~\ref{sec:inclusions}. Then, we formulate a convex optimization problem to compute the largest feasible PD-RCI set in Section~\ref{sec:volume}. In Section~\ref{sec:selecting_C} we provide a discussion of an approach to select a set parameterization that guarantees feasibility of the optimization problem.
    \subsection{Configuration-constrained polytopes}\label{sec:cc_polytopes}
   The main challenge in encoding the condition in~\eqref{eq:RCI_vert_condition} arises from the fact that, in general, it is intractable to characterize the intersection set on the right-hand-side of the inclusion since the sets $\mathcal{S}(p^+)$ are in their vertex representation~\cite{Tiwary2008}. To tackle this issue, a general methodology can be derived by adopting results on parameterized polytopes~\cite{Loechner1997}. In this paper, we tackle this challenge by leveraging recent results from~\cite{villanueva2022configurationconstrained} regarding constraints on facet representations.
    To this end, we parameterize the set $\mathcal{S}(p)$ in hyperplane representation as
    \begin{align}
        \label{eq:PD_RCI_parameterization}
        \mathcal{S}(p) \leftarrow \mathcal{S}(p|y^0,Y):=\left\{x : Cx \leq y^0+Yp\right\},
    \end{align}
    where $C \in \R^{m_s \times n}$ is a user-given matrix whose rows encode the normal vectors to the facets of $\mathcal{S}(p|y^0,Y)$. The offsets of these facets are affinely dependent on $p$ as $y^0+Yp$, where $y^0 \in \R^{m_s}$ and $Y:=[y^1 \cdots y^s] \in \R^{m_s \times s}$.
    Over the offset vector, we enforce \textit{configuration constraints}~\cite{villanueva2022configurationconstrained}: For a given matrix $\mathbf{E}$, these constraints are represented by the cone $\mathbb{S}:=\{y:\mathbf{E} y \leq \0\}$. They fix the facial configuration of the parameteric polytope $\mathcal{S}(p|y^0,Y)$ such that
    \begin{align}
        \label{eq:config_constraint_relation}
        \hspace{-5pt}
       \scalemath{0.99}{ y^0+Yp \in \mathbb{S} \Rightarrow \mathcal{S}(p|y^0,Y)=\mathrm{CH}\left\{V^k(y^0+Yp), k \in \mathbb{I}_1^N\right\}}
    \end{align}
    for any parameter $p$,
    where the matrices $\{V^k \in \R^{n_x \times m_s}, k \in \mathbb{I}_1^N\}$ capture the linear maps from the offset vector to the vertices of $\mathcal{S}(p|y^0,Y)$. For details regarding the construction of matrices $\mathbf{E}$ and $V^k$, we refer the reader to Appendix~\ref{sec:appendix}.
    %
    %
    To exploit the result in~\eqref{eq:config_constraint_relation} for synthesizing PD-RCI sets, we make the following assumption. 
    \begin{assumption} 
 \label{ass:positive_param}
$p \geq \0^s$ for all $p \in \mathcal{P}$. $\hfill\square$
 \end{assumption}
 Under Assumption~\ref{ass:positive_param}, we have that if $y^j \in \mathbb{S}$ for all $j \in \mathbb{I}_0^s$, then $y^0 + Yp \in \mathbb{S}$ for any $p \in \mathcal{P}$ because it is a conic combination. Then, from~\eqref{eq:config_constraint_relation}, it follows that for any $p \in \mathcal{P}$, the polytope
\begin{align}
    \hspace{-10pt}\mathcal{S}(p|y^0,Y)=\mathrm{CH}\left\{x^k(p):=V^k (y^0 +Y p), k \in \mathbb{I}_1^N\right\}, \label{eq:RCI_definition} 
\end{align}
i.e., $\mathcal{S}(p|y_0,Y)$ is the convex hull of $N$ vertices $x^k(p)$. For each vertex, we assign a parameter-dependent vertex control input
\begin{align}
\label{eq:control_law}
    u^k(p):=u^{k,0}+U^kp, && k \in \mathbb{I}_1^N,
\end{align}
where $U^k:=[u^{k,1} \cdots u^{k,s}] \in \R^{m \times s}$. In the sequel, we derive conditions on $(y^0,Y,u^{k,0},U^k)$ to enforce $\mathcal{S}(p|y^0,Y)$ to be a PD-RCI set with vertex control inputs defined in~\eqref{eq:control_law}. 

\begin{remark}
    Assumption~\ref{ass:positive_param} is without loss of generality. For any bounded parameter set $\hat{\mathcal{P}}$ violating Assumption~\ref{ass:positive_param}, there exists a vector $\mathring{p}$ such that $p:=\mathring{p}+\hat{p} \geq \0$ for all $\hat{p} \in \hat{\mathcal{P}}$. Then, Assumption~\ref{ass:positive_param} is satisfied by the parameter set $\mathcal{P}=\{1\} \times \{ \{\mathring{p}\} \oplus \mathcal{P}\} \subset \R^{s+1}$. The method we develop can then be applied to an LPV system with 
$$A(p):=1 \left(-\underset{j=1}{\overset{s}{\sum}} \mathring{p}_j A^j\right)+\underset{j=1}{\overset{s}{\sum}} p_j A^j$$ and matrix $B(p)$ defined similarly. $\hfill\square$   
\end{remark}

\subsection{Enforcing Inclusions~\eqref{eq:LPV_conditions:1} and~\eqref{eq:RCI_vert_condition}}\label{sec:inclusions}
We will now enforce that the set $\mathcal{S}(p|y^0,Y)$ is PD-RCI under the control law defined in~\eqref{eq:cc_control_1} with vertex control inputs parameterized as in~\eqref{eq:control_law}. To this end, we first ensure that the vertex representation~\eqref{eq:RCI_definition} of the set $\mathcal{S}(p|y^0,Y)$ holds for each $p \in \mathcal{P}$ by enforcing
\begin{align}
\label{eq:config_constraint_basic}
    y^j \in \mathbb{S}, && \forall  j \in \mathbb{I}_0^s.
\end{align}
Then, the PD-RCI condition~\eqref{eq:RCI_vert_condition} in Proposition~\ref{prop:PD_RCI_basic_cond} holds if and only if for each $k \in \mathbb{I}_1^N$ and $p \in \mathcal{P}$, the inclusion
\begin{align}
    &\{A(p)x^k(p)+B(p)u^{k}(p)\}\oplus \mathcal{W} \subseteq \underset{p^+ \in \mathbb{P}(p)}{\bigcap} \mathcal{S}(p^+|y^0,Y)  \label{eq:vertex_condition_param:1}
\end{align}
is verified, along with $\mathcal{S}(p|y^0,Y) \subseteq \mathcal{X}$ and $ u^k(p) \in \mathcal{U}$.
\subsubsection{System constraints}
\noindent
To enforce state constraints, recall that $\scalemath{0.95}{\mathcal{X}=\{x:H^x x \leq h^x\}}$, and that~\eqref{eq:RCI_definition} holds under~\eqref{eq:config_constraint_basic}.
Then, the inclusion $\mathcal{S}(p|y^0,Y) \subseteq \mathcal{X}$ for all $p \in \mathcal{P}$ holds if and only if
\begin{align}
\label{eq:state_incl_1}
H^xV^ky^0 + H^xV^k Y p \leq h^x, \ \forall  \ p \in \mathcal{P}, \ \forall \  k \in \mathbb{I}_1^N.
\end{align}
Recalling that $\scalemath{0.95}{\mathcal{P}=\{p:H^p p \leq h^p\}}$, Proposition~\ref{prop:strong_duality}  states that inequality~\eqref{eq:state_incl_1} holds if and only if the following conditions are feasible:
\begin{align}
    \label{eq:state_incl_final}
    \hspace{-3pt}
    \forall \ k \in \mathbb{I}_1^N  
   \begin{cases} H^x V^k y^0+\Lambda^k h^p \leq h^x,  \\
   \Lambda^k H^p = H^x V^k Y, \\
   \Lambda^k \geq \0^{m_x \times m_p},
   \end{cases}
\end{align}
where the inequalities are enforced elementwise.
We now enforce the input constraints $\mathcal{U}=\{u:H^u u \leq h^u\}$. We note from~\eqref{eq:control_law} that $u^k(p) \in \mathcal{U}$ for all $p \in \mathcal{P}$ if and only if
\begin{align*}
    H^u u^{k,0} +H^u U^k p \leq h^u, \ \forall \ p \in \mathcal{P}, \ \forall \ k \in \mathbb{I}_1^N,
\end{align*}
that can be equivalently written by Proposition~\ref{prop:strong_duality}, as
\begin{align}
    \label{eq:input_incl_final}
    \hspace{-4pt}
    \forall \ k \in \mathbb{I}_1^N  
   \begin{cases} H^u u^{k,0}+M^k h^p \leq h^u,  \\
   M^k H^p = H^u U^k, \\
   M^k \geq \0^{m_u \times m_p}.
   \end{cases}
\end{align}
\subsubsection{PD-RCI constraints}

\begin{figure*}[ht]
\begin{equation}
\tag{28.5}
\label{eq:matrix_definitions_gen}
\mathbf{F}_{ik}(y^0,Y,u^{k,0},U^k,\munderbar{y},\munderbar{Y}):=\begin{bmatrix} -2\left(M^{ik} \begin{bmatrix} Y \\ U^k \end{bmatrix}\right) & -\left(M^{ik} \begin{bmatrix} y^0 \\ u^{k,0} \end{bmatrix}-\munderbar{Y}_i^{\top}\right) \vspace{5pt}\\
    * & 2(\munderbar{y}_i-d_i) \end{bmatrix}, \qquad \mathbf{G}(\Gamma):=\begin{bmatrix} H^{p^{\top}} \Gamma H^p & -H^{p^{\top}} \Gamma h^p \\ * & h^{p^{\top}} \Gamma h^p \end{bmatrix}
\end{equation}
\vspace{-10pt} 
\noindent\rule{\textwidth}{0.4pt} 
\end{figure*}
To encode inclusion~\eqref{eq:vertex_condition_param:1}, we present an approach to characterize the intersection set on the right-hand-side. Defining the parameterized set $\mathcal{S}(p|\munderbar{y},\munderbar{Y})=\{x:Cx \leq \munderbar{y}+\munderbar{Y}p\}$, we enforce the inclusion
\begin{align}
    \label{eq:inside_set_inclusion}
    \mathcal{S}(p|\munderbar{y},\munderbar{Y}) \subseteq \underset{p^+ \in \mathbb{P}(p)}{\bigcap} \mathcal{S}(p^+|y^0,Y).
\end{align}
If for each $k \in \mathbb{I}_1^N$ and $p \in \mathcal{P}$, the inclusion
\begin{align}
\label{eq:modified_RCI_inside}
\{A(p)x^k(p)+B(p)u^{k}(p)\}\oplus \mathcal{W} \subseteq \mathcal{S}(p|\munderbar{y},\munderbar{Y})
\end{align}
is verified, then the desired inclusion~\eqref{eq:vertex_condition_param:1} is enforced. In order to encode inclusion~\eqref{eq:inside_set_inclusion}, we use the following result.
\begin{proposition}
\label{prop:proposition_intersection}
    For some $y^0 \in \R^{m_s}$ and $Y \in \R^{m_s \times s}$ such that $\mathcal{S}(p|y^0,Y)$ is nonempty for all $p \in \mathcal{P}$, define the vector $$\tilde{y}:=y^0 + \underset{p \in \mathcal{P}}{\min} \ Yp,$$
    where $\min$ is taken row-wise. Defining $\mathcal{Y}(\tilde{y}):=\{x:Cx\leq \tilde{y}\}$ and $\tilde{\mathcal{Q}}:=\bigcap_{p \in \mathcal{P}} \mathcal{S}(p|y^0,Y)$, it then holds that $\mathcal{Y}(\tilde{y})=\tilde{\mathcal{Q}}$. $\hfill\square$
\end{proposition}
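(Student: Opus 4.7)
The plan is to prove equality of the sets $\mathcal{Y}(\tilde y)$ and $\tilde{\mathcal{Q}}$ by a direct double inclusion, exploiting the key structural fact that the halfspace description $\{x:Cx\le y^0+Yp\}$ decouples row-by-row, so the infinite intersection over $p\in\mathcal{P}$ collapses to a row-wise tightest-bound computation. No duality or separation is needed; the argument is essentially an inf-sup swap that, because rows are independent, reduces to a minimum inside each halfspace.

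First I would establish $\tilde{\mathcal{Q}}\subseteq \mathcal{Y}(\tilde y)$. Take any $x\in\tilde{\mathcal{Q}}$. Then for every $p\in\mathcal{P}$ and every row index $i$, $C_i x \le y^0_i + Y_i p$. Since this inequality holds for \emph{every} $p\in\mathcal{P}$, I can minimize the right-hand side row-wise to obtain $C_i x \le y^0_i + \min_{p\in\mathcal{P}} Y_i p = \tilde y_i$. Stacking these rows gives $Cx\le \tilde y$, i.e. $x\in\mathcal{Y}(\tilde y)$.

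Next I would show $\mathcal{Y}(\tilde y)\subseteq \tilde{\mathcal{Q}}$. Take any $x\in\mathcal{Y}(\tilde y)$, so $C_ix\le \tilde y_i$ for each $i$. By the very definition of $\tilde y_i$ as a row-wise minimum, $\tilde y_i \le y^0_i + Y_i p$ for every $p\in\mathcal{P}$ and every $i$. Chaining the two inequalities yields $C_i x \le y^0_i + Y_i p$ for all $i$ and all $p\in\mathcal{P}$, which means $x\in\mathcal{S}(p|y^0,Y)$ for every $p\in\mathcal{P}$, hence $x\in\tilde{\mathcal{Q}}$. Combining both inclusions gives $\mathcal{Y}(\tilde y)=\tilde{\mathcal{Q}}$.

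The only point that merits a small justification is that the row-wise minimum in the definition of $\tilde y$ is well defined. This is the mildest possible obstacle: since $\mathcal{P}=\{p:H^p p\le h^p\}$ is a polytope (assumed bounded in the LPV context) and each $p\mapsto Y_i p$ is linear, the minimum is attained by standard linear programming, so each $\tilde y_i$ is finite. The nonemptiness hypothesis on $\mathcal{S}(p|y^0,Y)$ is not needed for the set-equality itself but guarantees the resulting $\tilde{\mathcal{Q}}$ (and therefore $\mathcal{Y}(\tilde y)$) is nonempty, which is what makes the proposition useful downstream for encoding the inclusion~\eqref{eq:inside_set_inclusion} as a finite set of affine constraints on $(\munderbar y,\munderbar Y)$.
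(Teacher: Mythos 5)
Your proof is correct and follows essentially the same double-inclusion argument as the paper, merely presenting the two directions in the opposite order and adding a brief (valid) remark on the attainment of the row-wise minimum over the bounded polytope $\mathcal{P}$. No substantive difference from the paper's proof.
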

\begin{proof}
    $1)$ For any $x \in \mathcal{Y}(\tilde{y})$, the inequality $Cx \leq \tilde{y}$ holds. As per the definition of $\tilde{y}$, this implies that $Cx \leq y^0+Yp$ holds for all $p \in \mathcal{P}$, such that $x \in \tilde{\mathcal{Q}}$. Hence, the inclusion $\mathcal{Y}(\tilde{y}) \subseteq \tilde{\mathcal{Q}}$ follows; \\
    $2)$ For any $x \in \tilde{\mathcal{Q}}$, the inequality $Cx \leq y^0+Yp$ holds for all $p \in \mathcal{P}$, or equivalently $Cx \leq \tilde{y}$ as per the definition of $\tilde{y}$. Hence, $x \in \mathcal{Y}(\tilde{y})$, such that the inclusion
    $\tilde{\mathcal{Q}} \subseteq \mathcal{Y}(\tilde{y})$ follows.
\end{proof}

Proposition~\ref{prop:proposition_intersection} implies that for a given $p \in \mathcal{P}$, the inclusion in~\eqref{eq:inside_set_inclusion} holds if and only if the inequality
\begin{align}
    \label{eq:prop_impliciation_incl}
    \munderbar{y}+\munderbar{Y}p \leq y^0+Yp^+, && \forall \ p^+ = p+\tilde{p} \in \mathbb{P}(p).
\end{align}
In order to encode~\eqref{eq:prop_impliciation_incl} for all $p \in \mathcal{P}$, we define the set
\begin{align*}
    \mathcal{P}^+ := \left\{ \begin{bmatrix} p \\ \tilde{p} \end{bmatrix} : \underbrace{\begin{bmatrix} H^p & \0 \\ \0 & H^{\delta} \\ H^p & H^p \end{bmatrix}}_{H^{p\delta}} \begin{bmatrix} p \\ \tilde{p} \end{bmatrix} \leq \underbrace{\begin{bmatrix} h^p \\ h^{\delta} \\ h^p \end{bmatrix}}_{h^{p\delta}} \right\}.
\end{align*}
Then, inequality~\eqref{eq:prop_impliciation_incl} holds for all $p \in \mathcal{P}$ if and only if
\begin{align}
\label{eq:inequality_intersection}
    \munderbar{y}+[\munderbar{Y}-Y \ \ -Y] \begin{bmatrix} p \\ \tilde{p} \end{bmatrix} \leq y^0, && \forall  \begin{bmatrix} p \\ \tilde{p} \end{bmatrix} \in \mathcal{P}^+,
\end{align}
From Proposition~\ref{prop:strong_duality}, inequality~\eqref{eq:inequality_intersection} holds if and only if the conditions
\begin{align}
\label{eq:inter_set_necc_suff}
\begin{matrix*}[l]
    \munderbar{y}+ {Q} h^{p\delta} \leq y^0, \vspace{3pt}\\
    {Q} H^{p\delta} = [\munderbar{Y}-Y \ \ -Y], \vspace{3pt}\\
    {Q}  \geq  \0^{m_s \times (2m_p+m_{\delta})}
\end{matrix*}
\end{align}
are verified.
Finally, we encode the RCI inclusion in~\eqref{eq:modified_RCI_inside}. To this end, we tighten the set $\mathcal{S}(p|\munderbar{y},\munderbar{Y})$ by the disturbance set $\mathcal{W}$ as
\begin{align*}
    \mathcal{S}(p|\munderbar{y},\munderbar{Y}) \ominus \mathcal{W}=\{x : Cx \leq {\munderbar{y}}+\munderbar{Y}p-d\}, &&  d:=\max_{w \in \mathcal{W}} C w,
\end{align*}
such that inclusion~\eqref{eq:modified_RCI_inside} holds for a given $k \in \mathbb{I}_1^N$ and $p \in \mathcal{P}$ if and only if for every row index $i \in \mathbb{I}_1^{m_s}$, the inequality
\begin{align}
\label{eq:inequality_1}
 C_i(A(p)x^k(p)+B(p)u^k(p)) \leq {\munderbar{y}}_i+\munderbar{Y}_ip-d_i
\end{align}
is satisfied.
We recall that for a given parameter $p \in \mathcal{P}$, each vertex and the corresponding control input of the set $\mathcal{S}(p)$ are given by $x^k(p)=V^k(y^0+Yp)$ and $u^k(p)=u^{k,0}+U^kp$, respectively. Then, we define matrices $\bar{C}^i \in \R^{s \times ns}, \bar{A}\in \R^{ns \times n}$ and $\bar{B} \in \R^{ns \times m}$ as
\begin{align*}
    &\bar{C}^i:=\I^s \otimes C_i, \ \
    \bar{A}:=[A^{1^{\top}} \cdot \cdot A^{s^{\top}}]^{\top}, \ \
    &\bar{B}:=[B^{1^{\top}} \cdot \cdot  B^{s^{\top}}]^{\top}, 
\end{align*}
based on which we define $M^{ik}:=\bar{C}^i[\bar{A}V^k \ \ \bar{B}] \in \R^{s \times (m_s+m)}$. Rearranging~\eqref{eq:inequality_1}, we rewrite the inequality as
\begin{align}
\label{eq:inequality_2}
\begin{bmatrix} p \\ 1 \end{bmatrix} ^{\top} \mathbf{F}_{ik}(y^0,Y,u^{k,0},U^k,\munderbar{y},\munderbar{Y}) \begin{bmatrix} p \\ 1 \end{bmatrix} \geq 0,
\end{align}
where the function $\mathbf{F}_{ik}$ is defined in~\eqref{eq:matrix_definitions_gen}.
For the PD-RCI condition in~\eqref{eq:modified_RCI_inside} to be satisfied, the parameters $\{y^0,Y,u^{k,0},U^k,k\in \mathbb{I}_1^N,\munderbar{y},\munderbar{Y}\}$ should be such that the the inequality in~\eqref{eq:inequality_2} is verified by all $p \in \mathcal{P}$ at all vertex indices $k \in \mathbb{I}_1^N$ and row indices $i \in \mathbb{I}_1^{m_s}$. Since~\eqref{eq:inequality_2} is a (non)convex quadratic inequality in $p$, we propose to use the S-procedure~\cite{lmi_book} to derive sufficient conditions to enforce it over all $p \in \mathcal{P}$. 
To this end, we recall from~\cite{Fazlyab2022} that the polytopic parameter set $\mathcal{P}=\{p:H^p p \leq h^p\}$ with $h^p \in \R^{m_p}$ satisfies
\begin{align*}
\mathcal{P} \subseteq \bigcap_{\Gamma \in \mathbb{G}} \left\{ p : \begin{bmatrix} p \\ 1 \end{bmatrix} ^{\top} \mathbf{G}(\Gamma) \begin{bmatrix} p \\ 1 \end{bmatrix} \geq 0 \right\},
\end{align*}
where the function $\mathbf{G}$ is defined in~\eqref{eq:matrix_definitions_gen}, and $\mathbb{G}$ is the set of symmetric matrices $\Gamma$ defined as
\begin{align*}
    \mathbb{G}:=\left\{\Gamma \in \R_+^{m_p \times m_p} : \Gamma=\Gamma^{\top}, \ \mathrm{diag}(\Gamma)=\0^{m_p} \right\}.
\end{align*}
Then, the inequality in~\eqref{eq:inequality_1} is verified by all $p \in \mathcal{P}$ if
\begin{align}
\label{eq:LMI_main_condition}
   \exists \ \Gamma^{k,i} \in \mathbb{G} \ : \ \mathbf{F}_{ik}(y^0,Y,u^{k,0},U^k,\munderbar{y},\munderbar{Y})-\mathbf{G}(\Gamma^{k,i}) \succeq 0,
\end{align}
which is a linear matrix inequality (LMI). Thus, by enforcing the LMI in~\eqref{eq:LMI_main_condition} for all $k \in \mathbb{I}_1^N$ and $i \in \mathbb{I}_1^{m_s}$, we obtain a convex characterization of the PD-RCI sets $\mathcal{S}(p)$. 
\begin{remark}
\label{remark:QC}
The unified parameterization of hyperplane and vertex representations via a single vector enables the synthesis of polytopic PD-RCI sets under more complex constraints. Specifically, polynomial constraints can be addressed through convex feasibility conditions derived via the S-procedure or Sum-of-Squares programming \cite{Fazlyab2022, cotoruello2021}. Developing these methods is a future research topic. $\hfill\square$
\end{remark}
\begin{remark}
Conservativeness in our approach primarily stems from fixing the normal vectors of the sets $\mathcal{S}(p|y^0,Y)$, a necessary trade-off in fixed parameterizations \cite{Gupta2019}. Further conservativeness may result from configuration constraints, with implications on polytope vertex configurations meriting future research \cite{Loechner1997}. Additionally, employing the S-procedure introduces a potential duality gap as discussed in \cite{Luo2010}, which is a known issue in deriving conditions like \eqref{eq:LMI_main_condition}.
$\hfill\square$
    \end{remark}
\noindent
\noindent

\subsection{Maximizing the size of the PD-RCI set}\label{sec:volume}
We define the size of a set $\mathcal{Z} \subseteq \mathcal{X}$ as
\begin{align}
    \label{eq:distance_metric}
    \mathrm{d}_{\mathcal{X}}(\mathcal{Z}):=\min_{\epsilon}\{\norm{\epsilon}_1 \ \ \mathrm{s.t.} \ \ \mathcal{X} \subseteq \mathcal{Z} \oplus \mathcal{D}(\epsilon)\},
\end{align}
where $\mathcal{D}(\epsilon):=\{x : Dx \leq \epsilon\}$ is a polytope with user-specified normal vectors $\{D^{\top}_i,i \in \mathbb{I}_1^{m_d}\}$. This is a modification of the Hausdorff distance between $\mathcal{Z}$ and $\mathcal{X}$:  if $\mathcal{D}(\epsilon)=\epsilon \mathcal{B}_{l}^n$, then $\mathrm{d}_{\mathcal{X}}(\mathcal{Z})$ is the standard $l$-norm Hausdorff distance. By allowing $D$ to be user-specified, we permit maximization in directions of interest. Clearly, $\mathrm{d}_{\mathcal{X}}(\mathcal{Z}) \geq 0$, and $\mathcal{Z}_1 \subseteq \mathcal{Z}_2 \subseteq \mathcal{X}$ implies
$\mathrm{d}_{\mathcal{X}}(\mathcal{Z}_2) \leq \mathrm{d}_{\mathcal{X}}(\mathcal{Z}_1)$.
Ideally, we want to compute the PD-RCI set that minimizes $$\sum_{p \in \mathcal{P}} \mathrm{d}_{\mathcal{X}}(\mathcal{S}(p|y^0,Y))$$
to maximize the invariant region for each $p \in \mathcal{P}$. Unfortunately, this objective is infinite dimensional. As an alternative, we sample parameters $\{p^j \in \mathcal{P},j \in \mathbb{I}_1^{\theta}\}$, and minimize its lower-bound 
\begin{align}
\label{eq:approximate_objective}
    \sum_{j=1}^{\theta} \mathrm{d}_{\mathcal{X}}(\mathcal{S}(p^j|y^0,Y)).
\end{align}
For example, points $p^j$ could be the vertices of $\mathcal{P}$. To implement~\eqref{eq:approximate_objective}, we must encode the inclusions
\begin{align}
\label{eq:cost_inclusions}
\mathcal{X}\subseteq \mathcal{S}(p^j|y^0,Y) \oplus \mathcal{D}(\epsilon^j), && \forall  j \in \mathbb{I}_1^{\theta}.
\end{align}
To this end, we assume to be given the vertices $\{\mathrm{x}^t,t \in \mathbb{I}_1^{v_x}\}$ of  $\mathcal{X}$. Then, the inclusions in~\eqref{eq:cost_inclusions} are equivalent~\cite{Schneider2013} to
\begin{align}
\label{eq:approx_constraints}
\forall \ j \in \mathbb{I}_1^{\theta}, \ t \in \mathbb{I}_1^{v_x}  \begin{cases}  \mathrm{x}^t= \mathrm{s}^{t,j}+ \mathrm{b}^{t,j}, \\
\mathrm{s}^{t,j} \in \mathcal{S}(p^j|y^0,Y), \ \ \mathrm{b}^{t,j} \in \mathcal{D}(\epsilon^j). 
\end{cases}
\end{align}
%
The inclusion can also be encoded directly with a halfspace representation of $\mathcal{X}$ using~\cite{Sadraddini2019}.
Thus, a large PD-RCI set $\mathcal{S}(p|y^0,Y)$ for LPV system~\eqref{eq:LPV_system} with the PD-vertex control law in~\eqref{eq:control_law} can be computed by solving the SDP problem
\begin{align}
\label{eq:final_SDP}
\min_{\mathbf{x}} \ \ \sum_{j=1}^{\theta} \norm{\epsilon^j}_1 \ \
\text{s.t.} \ \ \eqref{eq:config_constraint_basic}, \eqref{eq:state_incl_final}, \eqref{eq:input_incl_final}, \eqref{eq:inter_set_necc_suff}, \eqref{eq:LMI_main_condition}, \eqref{eq:approx_constraints}
\end{align}
with the optimization variables
\begin{align*}
    \mathbf{x}:=\begin{Bmatrix} y^0,Y,\{u^{k,0},U^k,k \in \mathbb{I}_1^N\},\munderbar{y},\munderbar{Y},  \{\Lambda^k,M^k,k \in \mathbb{I}_1^N\}, Q,\\
    \{\Gamma^{k,i}, k \in \mathbb{I}_1^N, i \in \mathbb{I}_1^{m_s}\},
    \{\epsilon^j,\mathrm{s}^{t,j},\mathrm{b}^{t,j}, j \in \mathbb{I}_1^{\theta}, t \in \mathbb{I}_1^{v_x}\}
    \end{Bmatrix}.
\end{align*}

\subsection{Selecting the matrix $C$ parameterizing the PD-RCI set}
\label{sec:selecting_C}
We briefly discuss some methods for choosing a matrix $C$ to parameterize the PD-RCI set $\mathcal{S}(p|y^0,Y)$ that guarantee the feasibility of Problem~\eqref{eq:final_SDP}. Notably, if the set $\mathcal{Y}(y_{\mathrm{PI}})=\{x:Cx \leq y_{\mathrm{PI}}\}$ is a parameter-independent (PI) RCI set for~\eqref{eq:LPV_system} for some $y_{\mathrm{PI}}$, then Problem~\eqref{eq:final_SDP} is feasible with $y^0=y_{\mathrm{PI}}$ and $Y=\0$.

This implies that the normal vector matrix of any polytopic RCI set can serve as matrix C. Established techniques like \cite{Bertsekas72,Pluymers2005_LPV} can compute such a matrix, but may lead to RCI sets with a large number of hyperplanes and vertices, making Problem~\eqref{eq:final_SDP} computationally expensive. Alternatively, methods allowing an a priori specification of representational complexity, e.g.,~\cite{Gupta2019,Liu2019}, can be used to obtain matrix $C$. However, such approaches can result in conservative PD-RCI sets. In the numerical examples in Section~\ref{sec:examples}, we compare these approaches to demonstrate the importance of selecting a suitable matrix $C$ that balances between representational complexity and conservativeness. We now present a method to calculate a candidate RCI set which we have empirically observed to balance complexity and conservativeness. This approach is a simplified variant of the one in~\cite{Gupta2020_vertex}, focusing exclusively on the linear dependency of system matrices on the parameter as in~\eqref{eq:P_dependence}, as opposed to rational dependency.

We propose to compute a PI-RCI set parameterized as
\begin{align}
    \label{eq:gupta_vertex_set}
    \mathcal{S}_{\mathrm{PI}}(\bm{W}):=\{x : \hat{C}\bm{W}^{-1} x \leq \1^{m_s}\}, && \hat{C} \in \R^{m_s \times n},
\end{align}
where the matrix $\hat{C}$ is selected a priori, and the set is parameterized by the invertible matrix $\bm{W} \in \R^{n \times n}$. Following the approach of~\cite{Gupta2020_vertex}, we transform the state of~\eqref{eq:LPV_system} as $\bm{z}=\bm{W}^{-1}x$, such that the dynamics in the transformed state space are 
\begin{align}
    \label{eq:transformed_space}
    \mathbf{z}^+=\bm{W}^{-1}(A(p)\bm{W} \mathbf{z}+ B(p) u+w).
\end{align}
Then, as shown in~\cite[Lemma 2]{Gupta2020_vertex}, if the set
\begin{align}
\mathbf{Z}:=\{\mathbf{z} : \hat{C} \mathbf{z} \leq \1\}=\mathrm{CH}\{\hat{\mathbf{z}}^j, j \in \mathbb{I}_1^N\}
\end{align}
is RCI for~\eqref{eq:transformed_space}, then the set $\mathcal{S}_{\mathrm{PI}}(\bm{W})$ is RCI for~\eqref{eq:LPV_system}. Since the vertices of $\mathbf{Z}$ are known a priori (as $\hat{C}$ is selected a priori), the RCI condition can be enforced by associating to each vertex a feasible control input $\{u^j \in \mathcal{U}, j \in \mathbb{I}_1^N\}$, and enforcing the inequality
\begin{align}
\label{eq:ankit_RCI_2}
    \hat{C}\bm{W}^{-1}(A(p)\bm{W} \hat{\mathbf{z}}^j + B(p) u^j + w) \leq \1
\end{align}
at each vertex index $j \in \mathbb{I}_1^N$ for all parameters $p \in \mathcal{P}$ and disturbances $w \in \mathcal{W}$.
Since the inequality in~\eqref{eq:ankit_RCI_2} is linear in $(A(p),B(p))$ for given $\bm{W}$, and $(A(p),B(p))$ depend linearly on $p$ as in~\eqref{eq:P_dependence}, it can be enforced for all parameters $p \in \mathcal{P}$ by enforcing it for all $p=p^i$, where $\{p^i, i \in \mathbb{I}_1^q\}$ are the vertices of $\mathcal{P}$. Similarly, the inequality can be enforced for all $w \in \mathcal{W}$ by enforcing it for all $w = w^l$, where $\{w^l,l\in \mathbb{I}_1^{q_w}\}$ are the vertices of $\mathcal{W}$. Note that if the vertices of $\mathcal{P}$ and $\mathcal{W}$ are not available, then Proposition~\ref{prop:strong_duality} can be used to enforce~\eqref{eq:ankit_RCI_2} directly using their hyperplane representations.  
 \begin{figure}[t]
\centering
\vspace*{0.1cm}
\hspace*{-0.55cm}
{
\resizebox{0.55\textwidth}{!}
{
\begin{tikzpicture}
\begin{scope}[xshift=0.cm]
\node[draw=none,fill=none](tanks_fig) {\includegraphics[trim=35 0 0 0,clip,scale=3]{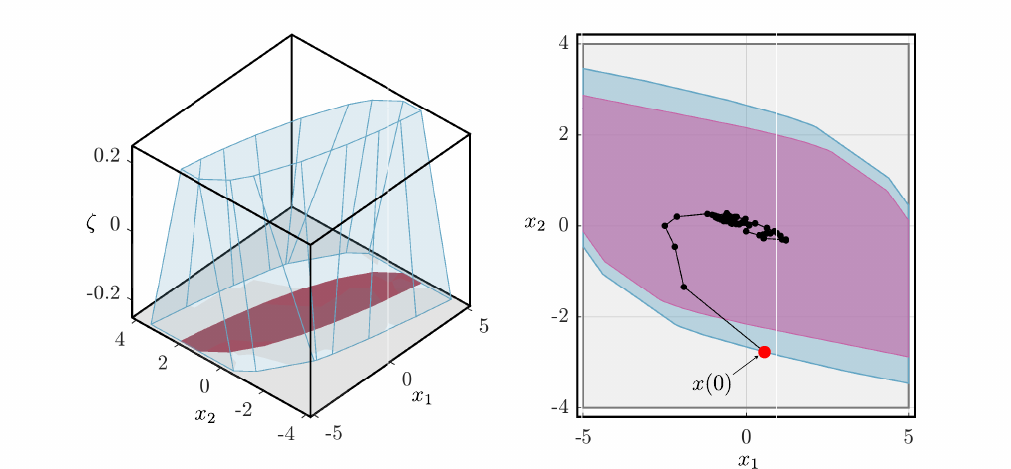}};
\end{scope}
\end{tikzpicture}}
}
\captionsetup{width=1\linewidth}
\caption{Results for Example~\ref{sec:Example1}. (\textit{Left: ($x$-$\zeta$ space}) The blue set is $\{(x,\zeta) : \zeta \in [-0.25,0.25], x \in \mathcal{S}([0.5+2\zeta,0.5-2\zeta])\}$, and the red set is $X_{\infty} \times \{-0.25\}$, where $X_{\infty}$ is the MRCI set. We obtain larger RCI sets by explicitly accounting for parameter variation. (\textit{Right: $x$ space}) The gray set is $\mathcal{X}$, and the blue set is $\mathcal{S}(p)$ with $p=[0,1]$ corresponding to $\zeta=-0.25$. The pink set is $\cap_{p^+ \in \mathbb{P}(p)} \mathcal{S}(p^+)$. Initializing $p(0) = [0,1]$ and $x(0) \in \mathcal{S}(p(0))$, we have $x(1) \in \cap_{p^+ \in \mathbb{P}(p)} \mathcal{S}(p^+)$ with $u(0)$ computed as~\eqref{eq:cc_control_1}. The black dotted line is the simulation trajectory, obtained by randomly sampling $p(t)$ while enforcing satisfaction of~\eqref{eq:bounded_param_variation}, and disturbance $w(t)$ sampled randomly from the vertices of $\mathcal{W}$.}
\label{fig:Example_1_results}
\end{figure}

In order to compute the RCI parameters $\bm{W}$ and $\{u^{j}, j \in \mathbb{I}_1^N\}$, we introduce the matrix $\bm{M}$ in lieu of $\bm{W}^{-1}$   as a variable in constraint~\eqref{eq:ankit_RCI_2}, and introduce the constraint $\bm{W}\bm{M} = \I^n$. Then, we formulate and solve the nonlinear programming problem (NLP) 
\begin{subequations}
\label{eq:NLP_init_C}
\begin{align}
    &\hspace{-10pt} \min_{\bm{W},\bm{M},\{u^j, j \in \mathbb{I}_1^N\}}   \ \ \mathrm{d}_{\mathcal{X}}(\mathcal{S}_{\mathrm{PI}}(\bm{W})) \\
    & \hspace{15pt} \text{s.t.} \qquad  \hat{C}\bm{M}(A(p^i)\bm{W} \hat{\mathbf{z}}^j + B(p^i) u^j + w^l) \leq \1, \\
    & \hspace{43pt}  H^x \bm{W} \hat{\mathbf{z}}^j \leq h^x, \ H^u u^j \leq h^u,  \  \bm{W}\bm{M} = \I^n, \label{eq:NLP_init_C:state_con} \\
     & \hspace{43pt}  \forall  \ j \in \mathbb{I}_1^N, \ i \in \mathbb{I}_1^q, \ l \in \mathbb{I}_1^{q_w}, 
\end{align}
\end{subequations}
where~\eqref{eq:NLP_init_C:state_con} enforces  the state constraints $\mathcal{S}_{\mathrm{PI}}(\bm{W}) \subseteq \mathcal{X}$, and the objective defined in~\eqref{eq:distance_metric} minimizes the distance between $\mathcal{S}_{\mathrm{PI}}(\bm{W})$ and $\mathcal{X}$.
Using the solution of~\eqref{eq:NLP_init_C}, we parameterize the PD-RCI set in~\eqref{eq:PD_RCI_parameterization} as $C  \leftarrow \hat{C} \bm{W}^{-1}$.
Note that Problem~\eqref{eq:NLP_init_C} can be solved using a standard off-the-shelf NLP solver like IPOPT~\cite{Wchter2005}.

The following scheme summarizes our approach to synthesize and use PD-RCI sets and PD-vertex control laws for System~\eqref{eq:LPV_system}.
\begin{enumerate}
    \item Select matrix $C$ parameterizing an RCI set as in \eqref{eq:PD_RCI_parameterization}. Alternatively, compute appropriate matrix $C$ by solving Problem~\eqref{eq:NLP_init_C}.
    \item Construct configuration constraints matrices $\mathbf{E}$ and $V^k$ as in Appendix~\ref{sec:appendix}.
    \item Select matrix $D$ to formulate the distance function in~\eqref{eq:distance_metric}.
    \item Solve the SDP~\eqref{eq:final_SDP}, extract PD-RCI set parameters $y^0,Y$ and PD-vertex controls $\{u^{k,0},U^k,k \in \mathbb{I}_1^N\}$.
    \item If $Cx(t)\leq y^0+Yp(t)$ holds, apply the input $u(t)$ in~\eqref{eq:cc_control_1}.
\end{enumerate}

\section{Numerical examples}
\label{sec:examples}
In this section, we present three numerical examples to demonstrate our PD-RCI set computation approach. In Examples~\ref{sec:Example1} and~\ref{sec:Example2}, we compare our approach with the method proposed in~\cite{Gupta2023} for computing PD-RCI sets.
We recall that in~\cite{Gupta2023}, the PD-RCI set is parameterized as the $0$-symmetric polytope
 \begin{align}
     \label{eq:ankit_RCI}
     \mathcal{S}(p)=\left\{x: -\1 \leq \left(\sum_{j=1}^s p_j P^j\right) W^{-1} x \leq \1\right\},
 \end{align}
that is rendered positive invariant with the PD-linear feedback law $ u=\left(\sum_{j=1}^s p_jK^j\right)x.$
The parameters $\{P^j,K^j,j \in \mathbb{I}_1^s,W\}$ are computed using a sequential SDP methodology.  In Example~\ref{sec:Example3}, we employ the NLP method outlined in Section~\ref{sec:selecting_C} to compute a matrix $C$ that parameterizes a PD-RCI set for a 4-dimensional system.
The SDP problems were modeled using YALMIP~\cite{Lofberg2004}, and solved using the MOSEK SDP solver~\cite{mosek} on a laptop with Intel i7-7500U CPU and 16GB of RAM. The NLP in Example~\ref{sec:Example3} is modeled using CasADI~\cite{Andersson2019} and solved using IPOPT~\cite{Wchter2005}.

\subsection{Double integrator}
\label{sec:Example1}
We consider the parameter-varying double integrator
\begin{align*}
    x^+=\begin{bmatrix} 1+\zeta & 1+\zeta \\ 0 & 1+\zeta \end{bmatrix} x+\begin{bmatrix} 0 \\ 1+\zeta \end{bmatrix} u+w, && |\zeta|\leq 0.25, 
\end{align*}
with $\mathcal{X}=5\mathcal{B}_{\infty}^2$, $\mathcal{U}=\mathcal{B}_{\infty}^1$ and $\mathcal{W}=\{w:|w| \leq [0.25\ 0]^{\top}\}$.
This system can be brought to the form in~\eqref{eq:LPV_system} with 
$$
\scalemath{0.92}{
\left[ \begin{array}{c|c}
   A^1 &
   A^2
\end{array}\right]=\left[ \begin{array}{c|c}
   \begin{matrix} 1.25 & 1.25 \\ 0 & 1.25 \end{matrix}  & \begin{matrix} 0.75 & 0.75 \\ 0 & 0.75 \end{matrix}
   \end{array} \right], \ 
   \begin{matrix} B^1=\begin{bmatrix} 0 & 1.25 \end{bmatrix}^{\top}, \\ B^2=\begin{bmatrix} 0 & 0.75 \end{bmatrix}^{\top}, \end{matrix}
   }
$$
using $p=[(0.5+2\zeta),(0.5-2\zeta)]$ 
and the simplex parameter set $\mathcal{P}=\{p:p \in [\0,\1], p_1+p_2=1\}$.
\begin{table}[t]
\centering
\begin{tabular}{|c|c|c|c|c|c|}
\hline
$\zeta$ & -0.25 & -0.125 & 0 & 0.125 & 0.25 \\ \hline
$\mathrm{d}_{\mathcal{X}}(\mathcal{S}(p))$ & 37.1124 & 41.2856 & 45.4587 & 49.6318 & 53.8049 \\ \hline
\end{tabular}
\caption{Values of $\mathrm{d}_{\mathcal{X}}(\mathcal{S}(p))$ as a function of scheduling parameter $\zeta$.}
\label{tab:dx_sp}
\end{table}

\begin{table}[t]
\centering
\begin{tabular}{|c|c|c|c|c|c|c|c|}
\hline
$\kappa$ & 0.05 & 0.1 & 0.2 & 0.3 & 0.4 & 0.5 \\ \hline
$d_{\mathrm{tot}}$ & 261.37 & 265.64 & 273.78 & 274.76 & 274.94 & 274.94 \\ \hline
\end{tabular}
\caption{Total distance $d_{\mathrm{tot}}$ for different values of parameter rate variation bound $\kappa$.}
\label{tab:dtot_kappa}
\end{table}
 In Figure~\ref{fig:Example_1_results}, we plot the PD-RCI set computed for this system using $C \in \R^{16 \times 2}$ constructed using the normal vectors to the maximal RCI (MRCI) set $X_{\infty}$, for which we compute the matrices $\{V^k,k\in \mathbb{I}_1^{16},\mathbf{E}\}$ as described in \cite[Section 3.5]{villanueva2022configurationconstrained}. For simplicity, we select $\mathcal{D}(\epsilon)=\{x:Cx \leq \epsilon\}$ in the formulation of the constraint in~\eqref{eq:approx_constraints}. Finally, we use the parameter variation bound $\mathcal{R}=0.2 \mathcal{B}^2_{\infty}$. We report that $\mathrm{d}_{\mathcal{X}}(X_{\infty})=53.20$ (see Equation~\eqref{eq:distance_metric} for definition), while the size of the PD-RCI set varies with $\zeta$, recalling that $p=[(0.5+2\zeta),(0.5-2\zeta)]$, as in Table~\ref{tab:dx_sp}.
Since smaller values of $\mathrm{d}_{\mathcal{X}}$ correspond to greater coverage of $\mathcal{X}$ by the PD-RCI set, we observe reduced conservativeness by explicitly accounting for the scheduling parameter. We report that total construction and solution time is $2.5$s.

\textbf{\textit{Comparison with~\cite{Gupta2023}:}} We now compare our results with those obtained using the approach presented in~\cite{Gupta2023}. To this end, we parameterize our matrix $C$ using the solution of the procedure in~\cite{Gupta2023} with $p=[0.5,0.5]$, resulting in $m_s=N=8$. We also select $D=C$. Since an advantage of our approach compared to~\cite{Gupta2023} is the ability to explicitly account for parameter variation bound $\mathcal{R}$, we perform the comparison utilizing $\mathcal{R}=\kappa \mathcal{B}_{\infty}^2$ for different values of $\kappa$. We compare the result using the metric
\begin{align*}
    \mathrm{d}_{{\mathrm{tot}}}:=\sum_{p \in \tilde{\mathcal{P}}} d_{\mathcal{X}}(\mathcal{S}(p)),
\end{align*}
where $\tilde{\mathcal{P}} \subset \mathcal{P}$ is a discrete valued set of parameters sampled from $\mathcal{P}$. In our experiments, we build $\tilde{\mathcal{P}}$ using $200$ samples of $\zeta$ sampled uniformly in $[-0.25,0.25]$. The resulting value of $\mathrm{d}_{\mathrm{tot}}$ for the set $\mathcal{S}(p)$ obtained in~\cite{Gupta2023} is $277.01$, while as we vary $\kappa$, we obtain the values in Table~\ref{tab:dtot_kappa},
and $d_{\mathrm{tot}} = 274.94$ for all $\kappa \in [0.5,1]$. As expected, these results indicate reduced conservativeness in the PD-RCI sets with respect to the approach of~\cite{Gupta2023} when explicitly accounting for the parameter variation bounds. 
\subsection{Quasi-LPV system}
\label{sec:Example2}
 As in~\cite{Gupta2023}, our approach can be used to synthesize RCI sets for nonlinear systems that can be represented as quasi-LPV systems in which the scheduling parameter depends on the current state. Because of this dependency, the RCI set representation must be independent of the scheduling parameter. However, invariance in the RCI set can be enforced using a parameter-dependent control law. As compared to~\cite{Gupta2023}, we enforce invariance using a parameter-dependent vertex control law instead of a parameter-dependent linear feedback law. Since any linear feedback law can be interpolated by a vertex control law~\cite{Gutman1986}, it follows that the set we compute is less conservative. For illustration, we consider the Van der Pol oscillator system
\begin{align*}
    \dot{x}_1=x_2, \ \  
    \dot{x}_2=-x_1+\mu(1-x_1^2)x_2+u.
\end{align*}
We discretize this system using the forward Euler scheme with timestep $\delta$, in order to obtain the discrete time dynamics
\begin{align*}
x(t+1)=\left(\begin{bmatrix} 1 & \delta \\ -\delta & 1 \end{bmatrix}+\begin{bmatrix} 0 & 0 \\ 0 & \mathbf{q}(x_1(t)) \end{bmatrix}\right) x(t)+\begin{bmatrix} 0 \\ \delta \end{bmatrix} u(t),
\end{align*}
where $\mathbf{q}(x_1)=\mu\delta(1-x_1^2)$. As described in~\cite{Gupta2023}, we obtain an LPV representation~\eqref{eq:LPV_system} for this system with
$$
\left[ \begin{array}{c|c}
   A^1 &
   A^2 
\end{array}\right]=\left[ \begin{array}{c|c}
   \begin{matrix} 1 & \delta \\ -\delta & 1 \end{matrix} &     \begin{matrix} 1 & \delta \\ -\delta & 2 \end{matrix} 
\end{array}\right], \ \ B^1,B^2=\begin{bmatrix} 0 \\ \delta \end{bmatrix},
$$
by selecting $p_1=1-\mathbf{q}(x_1)$ and $p_2=\mathbf{q}(x_1)$. The system constraints are $\mathcal{X}=\{x:\norm{x}_{\infty} \leq 1\}$ and $\mathcal{U}=\{u:|u|\leq 1\}$. Then, $\mathcal{P}=\{p:p_1 \in [1-\mu\delta,1],p_2 \in [0,1],p_1+p_2=1\}$. These parameter bounds are obtained by maximizing and minimizing $p_1$ and $p_2$ over $\mathcal{X}$. To ensure that Assumption~\ref{ass:positive_param} holds, i.e., $p\geq \0$ for all $p \in \mathcal{P}$, we select $\delta$ such that $1-\mu \delta \geq 0$.  Since the parameter depends on the current state, we synthesize a parameter independent RCI set by enforcing $Y=0$, $\munderbar{y}=y^0$ and $\munderbar{Y}=\0$. However, we allow $U^k$ to be freely computed, such that the invariance-inducing vertex control law is parameter dependent. Finally, we select $\mathcal{R}=\mathcal{P}$ for simplicity.
In Figure~\ref{figure:example_2}, we plot the sets computed by~\eqref{eq:final_SDP}, and compare the results with those presented in~\cite{Gupta2023}. The set $\mathcal{S}(p|y^0,\0)$ is parameterized with the following two choices of $C$: $(i)$  The same normal vectors found in the solution from~\cite{Gupta2023}; $(ii)$  The normal vectors of a $30$-sided uniform polytope, constructed as in~\cite[Remark 3]{villanueva2022configurationconstrained}. In both cases, we select $D$ to be a $30$-sided uniform polytope.
While the distance metric of the set computed in~\cite{Gupta2023} is $d_{\mathcal{X}}(\mathcal{S})=20.95$, our first parameterization results in $d_{\mathcal{X}}(\mathcal{S})=19.97$, and the second one in $d_{\mathcal{X}}(\mathcal{S})=18.15$, with smaller values indicating greater coverage of $\mathcal{X}$. These results suggest that the use of configuration-constrained polytopes  with parameter-dependent vertex control laws can result in RCI sets with reduced conservativeness for quasi-LPV systems.

\begin{figure}[t]
\centering
\hspace*{-0.0cm}
{
\resizebox{0.4\textwidth}{!}
{
\begin{tikzpicture}
\begin{scope}[xshift=0cm]
\node[draw=none,fill=none](tanks_fig) {\includegraphics[trim=0 0 0 0,clip,scale=2]{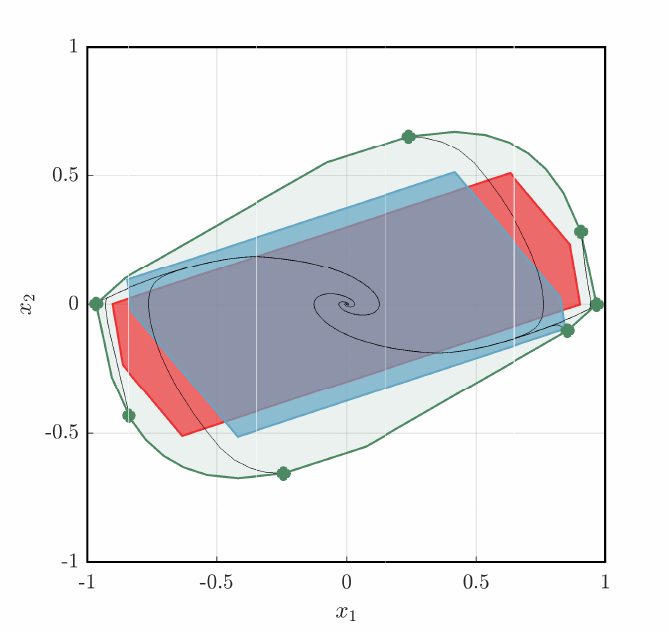}};
\end{scope}
\end{tikzpicture}}
}
\captionsetup{width=1\linewidth}
\caption{Results for Example~\ref{sec:Example2}. The green set denotes the RCI set we compute with rows of matrix $C$ representing the normal vectors of a $30$-sided uniform polytope. The blue set is the RCI set obtained using the approach of~\cite{Gupta2023}, and the red set in the RCI set obtained using our approach, with matrix $C$ chosen to be the same as the blue set. Closed-loop trajectories obtained using the vertex feedback law are plotted, illustrating invariance of the green set.}
\label{figure:example_2}
\end{figure}

\subsection{Vehicle lateral dynamics}
\begin{figure}[t]
\centering
\resizebox{0.42\textwidth}{!}
{
\begin{tikzpicture}
    \node[draw=none, fill=none] at (0,0) {\includegraphics[trim=0 12cm 0 1.5cm, clip, scale=1]{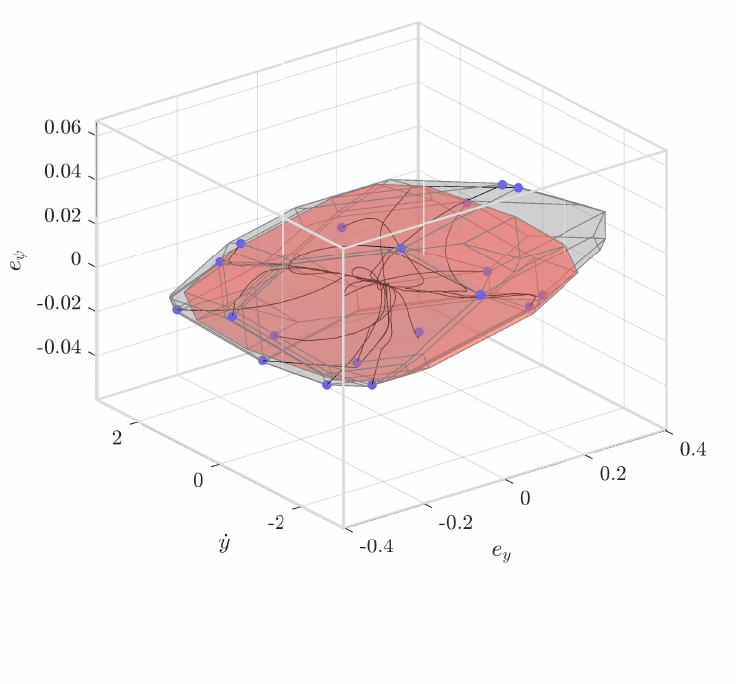}}; 
    
    \node[draw=none, fill=none] at (0,-9.) {\includegraphics[trim=0cm 1.9cm 0 13.7cm, clip, scale=1]{figure_1_R3_top.pdf}}; 
\end{tikzpicture}
}
\captionsetup{width=1\linewidth}
\caption{Projections of the sets $\tilde{\mathcal{S}}$ in grey, and $\mathcal{S}_{\mathrm{PI}}(\bm{W})$ in red. (\textit{Top}: Projection to $e_y$-$\dot{y}$-$e_{\psi}$ space, \textit{Bottom}: Projection to $e_y$-$\dot{y}$-$\dot{\psi}$ space.) Blue dots indicate $x(0)$ for several closed-loop trajectories, shown in black, resulting from the parameter-dependent vertex control law. The scheduling parameter sequences satisfy~\eqref{eq:bounded_param_variation} along with $p_2=1/p_1$, and the disturbance sequences are randomly sampled from $\mathcal{W}$. }
\label{fig:bicycle_solution}
\end{figure}

\label{sec:Example3}
We consider the problem of designing a PD-RCI set for vehicle lateral dynamics described by the bicycle model 
\begin{align}
\label{eq:lateral_dynamics}
     \dot{x} = (\mathbf{A}^0+v_x \mathbf{A}^1+ ({1}/{v_x}) \mathbf{A}^2) x+\mathbf{B}u+\mathbf{B}_{\mathrm{w}}w,
\end{align}
with state $x:=[e_y \ \dot{y} \ e_{\psi} \ \dot{\psi}]^{\top}$, where $e_y$ [rad] is the lateral error, $\dot{y}$ [m/s] the lateral velocity, $e_{\psi}$ [rad] the orientation error and $\dot{\psi}$ [rad/s] is the yaw rate. The input $u=[\delta_\mathrm{s} \ \mu_{\mathrm{b}}]^{\top}$, where $\delta_{\mathrm{s}}$ [rad] is the steering angle, and $\mu_{\mathrm{b}}$ [Nm] is the braking yaw moment. The disturbance $w = v_{\mathrm{w}}^2$, where $v_{\mathrm{w}} \in [-10,10]$ [m/s] is the wind velocity, and $v_x$ [m/s] is the current measured vehicle longitudinal velocity.
The model matrices from~\cite{ag19b} are given as
 \begin{align*}
 \scalemath{0.9}{
      \begin{matrix}
      &\overbrace{\begin{bmatrix} 0 & 1 & 0 & 0 \\ 0 & 0 & 0 & 0 \\ 0 & 0 & 0 & 1 \\ 0 & 0 & 0 & 0 \end{bmatrix}}^{\mathbf{A}^0} \ \   
      \overbrace{\begin{bmatrix} 0 & 0 & 1 & 0 \\ 0 & 0 & 0 & -1 \\ 0 & 0 & 0 & 0 \\ 0 & 0 & 0 & 0 \end{bmatrix}}^{\mathbf{A}^1} \ \ 
      \overbrace{\begin{bmatrix} 0 & 0 & 0 & 0 \\ 0 & -171.29 & 0 & 85.25 \\ 0 & 0 & 0 & 0 \\ 0 & 42.19 & 0 & -199.65 \end{bmatrix}}^{\mathbf{A}^2}, \vspace{5pt} \\
      &\underbrace{\begin{bmatrix} 0 & 65.8919 & 0 & 43.6411 \\ 0 & 0 & 0 & 0.2287e^{-3} \end{bmatrix}}_{\mathbf{B}^{\top}} \ \  \underbrace{\begin{bmatrix} 0 & 0.0018 & 0 &-0.0022 \end{bmatrix}}_{\mathbf{B}_{\mathrm{w}}^{\top}}.
      \end{matrix}
      }
\end{align*}
The states are constrained as $\mathcal{X}=\{x: |x| \leq [0.4 \ 3 \ 10\pi/180 \ 1]^{\top}\}$ and inputs as $\mathcal{U}=\{u: |u| \leq [2.5 \pi/180 \ 1]^{\top}\}$.
To design a PD-RCI set for this system, we discretize the dynamics using the forward Euler scheme with timestep of $0.025$s, and consider the parameter vector $p=[v_x \ 1/v_x \ 1]^{\top}$. Then, we obtain the system matrices in~\eqref{eq:P_dependence} as $A^1=\delta \mathbf{A}^1$, $A^2=\delta \mathbf{A}^2$, ${A^3=\delta \mathbf{A}^0+\I^4, B^1,B^2=\0, B^3=\delta \mathbf{B}}$, and the disturbance set as $\mathcal{W}=\{\delta \mathbf{B}_w w : w \in [0,100]\}$. We assume that the longitudinal velocity is bounded as $v_x \in [20,100]/3.6$ [m/s], where division by $3.6$ is performed to convert [Km/h] to [m/s].
Based on these bounds, we define the parameter set as $\mathcal{P}=\hat{\mathcal{P}} \times \{1\}$, where 
\begin{align*}
\scalemath{1}{
    \hat{\mathcal{P}}:=\mathrm{CH}\left\{\begin{bmatrix} 5.556 \\ 0.18 \end{bmatrix}, \begin{bmatrix} 27.777 \\ 0.036 \end{bmatrix}, \begin{bmatrix} 19.289 \\ 0.036 \end{bmatrix}, \begin{bmatrix} 5.556 \\ 0.125 \end{bmatrix}\right\}}
\end{align*}
overapproximates the set $\{p : p_1 \in [20,100]/3.6, \ p_2=1/p_1\}$.
To define the increment set $\mathcal{R}$, we assume $|v_x^+-v_x| \leq 1$, or equivalently $|p_1^+-p_1|\leq 1$.
We then derive bounds on the variation of parameter $p_2=1/p_1$ as follows. For any $p_1\in [20,100]/3.6$, $p_1^+ \in [p_1-1,p_1+1] \cap [20,100]/3.6$ holds. Then, the variation $\tilde{p}_2=p_2^+-p_2=(p_1 - p_1^+)/(p_1p_1^+)$
is maximized if we have $p_1^+=\max(p_1-1,20/3.6)$, and minimized if $p_1^+=\min(p_1+1,100/3.6)$. Thus, the bounds on $\tilde{p}_2$ depend nonlinearly on $p_1$. For simplicity, we select the largest and smallest values of these bounds over all $p_1 \in [20,100]/3.6$ as bounds on $\tilde{p}_2$, i.e., $p_1=(20/3.6)+1$ and $p_1^+=20/3.6$ result in the largest value of $\tilde{p}_2=0.02746$, and $p_1=20/3.6$ and $p_1^+=(20/3.6)+1$ result in the smallest value of $\tilde{p}_2=-0.02746$. Thus, $\mathcal{R}=\{\tilde{p} : |\tilde{p}| \leq [1 \ 0.02746 \ 0]^{\top}\}$ describes the increment set.
\begin{remark}
    While we model $p_1$ and $p_2$ as independent parameters verifying $(p_1,p_2) \in \hat{\mathcal{P}}$, they satisfy $p_1p_2=1$ in reality. Future work can focus on exploiting this dependence to synthesize PD-RCI sets with conservativeness further reduced. $\hfill\square$
\end{remark}

To derive the matrix $C$ that parameterizes the PD-RCI set in~\eqref{eq:PD_RCI_parameterization}, we implement the procedure detailed in Section~\ref{sec:selecting_C}. We choose $\hat{C}$ in~\eqref{eq:gupta_vertex_set} with $m_s=24$ normal vectors as $\hat{C}=[\I^4 \ -\I^4 \ 0.75\bm{C}]^{\top}$, where $\bm{C} \in \R^{4 \times 16}$ represents the $16$ vertices of the set $\mathcal{B}_{\infty}^{4}$ arranged column-wise. This selection leads to $N=48$ vertices for the set $\mathcal{S}_{\mathrm{PI}}(\bm{W})$. We select matrix $D=\bm{C}^{\top}$ to formulate the objective of Problem~\eqref{eq:NLP_init_C}.
Solving Problem~\eqref{eq:NLP_init_C} takes $8.6592$ s, yielding 
$$
\bm{W}=\begin{bmatrix}
    0.3819  & -0.0432  & -0.0542  &  0.0438 \\
    0.0057  &  2.8432  & -0.1253  &  0.4704 \\
   -0.0225  & -0.0423  &  0.0241  & -0.0451 \\
    0.0069  &  0.0712  &  0.0583  &  0.6544
\end{bmatrix}.
$$
Then we select $C=\hat{C}\bm{W}^{-1}$ to parameterize the PD-RCI set $\mathcal{S}(p|y^0,Y)$. Using this parameterization, we formulate and solve Problem~\eqref{eq:final_SDP}. The problem building and solution time amounts to $8.4223$ s. The sets obtained are plotted in Figure~\ref{fig:bicycle_solution}. We also plot several simulated trajectories of the plant, with control input computed as in~\eqref{eq:cc_control_1}. The computed PD-vertex control law successfully regulates the system from all $x(0) \in \mathcal{S}(p(0)|y^0,Y)$.  
We report that the volume of $\mathcal{S}_{\mathrm{PI}}(\bm{W})$ is $0.0327$, while the volume of the set $\tilde{\mathcal{S}}$ is $0.0459$. This highlights that we obtain a larger region of attraction by considering PD-RCI sets. 


\section{Conclusions}
We have presented a method to synthesize PD-RCI sets for LPV systems, with invariance induced using PD-vertex control laws. These sets and control laws are computed as the solution of a single SDP problem, formulated by exploiting properties of configuration-constrained polytopes. The method outperforms state-of-the-art approaches, both with respect to conservativeness and computational burden. Owing to the fact that the set of PD-RCI sets now has a convex characterization, future work aims at synthesizing tube-based model predictive control schemes using these sets for LPV systems with bounded parameter variation. Moreover, the study of data-driven characterization of such sets, and extending the methods to rational parameter dependence are current research directions.

 \bibliography{references}

 \section{Appendix : Construction of configuration-constraints}
 \label{sec:appendix}
Given a matrix $C$, we recall a method from~\cite[Section 3.5]{villanueva2022configurationconstrained} to construct matrix $\mathbf{E}$ that characterizes a vertex configuration domain of the polytope $\mathcal{Y}(y):=\{x:Cx \leq y\}$, such that $\mathbf{E}y \leq 0$ implies $\mathcal{Y}(y)=\mathrm{CH}\{V^{k}y, k \in \mathbb{I}_1^N\}$. Observe that this implication is equivalent to~\eqref{eq:config_constraint_relation}. Here, we focus on a particular methodology for completeness. For further technical results, the reader is referred to~\cite[Theorem 2]{villanueva2022configurationconstrained}.

Suppose that for some $\sigma \in \R^{m}$, the polytope $\mathcal{Y}(\sigma)$ has $N$ unique vertices, and each vertex results from the intersection of exactly $n$ hyperplanes (minimal representation). Denote the indices of the hyperplanes intersection at vertex $k$ as $J_k$, with $|J_k|=n$ and $\max\{J_k\}\leq m_s$ for each $k \in \mathbb{I}_1^N$. Let $C_{J_{k}} \in \R^{n \times n}$ (corres. $\sigma_{J_{k}}$) be a matrix constructed using rows $J_{k}$ of $C$ (corres. $\sigma$), such that vertex $k$ of $\mathcal{Y}(\sigma)$ is given by $C_{J_{k}}^{-1}\sigma_{J_{k}}$. Then, construct the matrices $V^k \in \R^{n \times m_s}$ with zeros everywhere except in columns $J_{k}$, with these columns populated by the columns of $C_{J_{k}}^{-1}$. It then follows that vertex $k$ of $\mathcal{Y}(\sigma)$ is $V^k \sigma$. Using these matrices $V^k$, the vertex configuration domain matrix $\mathbf{E}$ can be constructed as
$$
\mathbf{E}=\begin{bmatrix} CV^1-\I^{m_s} \\ \vdots \\ CV^N-\I^{m_s} \end{bmatrix}.
$$
This matrix, along with $V^k$ as guaranteed to satisfy~\eqref{eq:config_constraint_relation} as per~\cite[Theorem 2]{villanueva2022configurationconstrained}. The selection of $(C,\sigma)$ that is optimal for our application is a  topic for future research.

\end{document}